\newcommand{\ket}[1]{| #1 \rangle}
\newcommand{\bra}[1]{\langle #1 |}
\newcommand{\proj}[1]{\ket{#1}\bra{#1}}
\newcommand{\Ab}[1]{ \left| #1 \, \right|} % absolute value
\newcommand{\algeq}{\mathrel{\overset{\mathrm{alg}}{=}}}
\newcommand{\notalgeq}{\mathrel{\overset{\mathrm{alg}}{\neq}}}
\newcommand\bigforall{\mbox{\Large $\mathsurround=0pt\forall$}}
\newtheorem{definition}{Definition}
\newtheorem{lem}{Lemma}
\newtheorem{conj}{Conjecture}
\newtheorem{result}{Result}
\DeclareMathOperator{\Tr}{Tr}
\DeclareMathOperator{\Prob}{Prob}
\begin{document}

\preprint{APS/123-QED}

\title{Picking NPA constraints from a randomly sampled quantum moment matrix}% Force line breaks with \\

\author{Giuseppe Viola}
\affiliation{Department of Physics, University of Siegen, Walter-Flex-Straße 3, 57068 Siegen, Germany }
\author{Anubhav Chaturvedi}
\affiliation{Faculty of Applied Physics and Mathematics, Gdańsk University of Technology, Gabriela Narutowicza 11/12, 80-233 Gdańsk, Poland}
\author{Piotr Mironowicz}
%\email{piomiron@pg.edu.pl}
\affiliation{Department of Algorithms and System Modeling, Faculty of Electronics, Telecommunications and Informatics, Gda\'{n}sk University of Technology, Poland}
\affiliation{International Centre for Theory of Quantum Technologies, University of Gda\'{n}sk, Wita Stwosza 63, 80-308 Gda\'{n}sk, Poland}

\date{\today}

\begin{abstract}
	We introduce a simple and flexible method for constructing semidefinite programming relaxations of quantum correlations based on the Navascués--Pironio--Acín (NPA) hierarchy. Instead of deriving algebraic constraints explicitly, our approach identifies them numerically by sampling random moment matrices arising from quantum realizations.
	We show that, under broad conditions, a single random realization suffices to recover all constraints of the NPA moment matrix with probability $1$. We also characterize the regimes in which additional equalities may appear, relating them to the rank of measurement operators and the structure of the underlying scenario.
	Our method provides a practical alternative to standard algebraic constructions and enables efficient analysis of a wide range of operational scenarios, including cases with constraints on the rank of measurements.
\end{abstract}

\maketitle

\section{Introduction}
\label{sec:introduction}

Recently, a significant increase in interest in the use of quantum information processing methods for tasks traditionally associated with classical Information and Communication Technology (ICT), such as cryptography, has been observed. One of the most innovative approaches in this direction is the so-called device-independent (DI) cryptography~\cite{Yao98,DIreview}, which allows for the design of cryptographic protocols based on quantum technologies, such as quantum entanglement~\cite{H4}, providing security guarantees even in scenarios where the internal functioning of the devices, or their vendor, is not trusted.

The rapid development of DI cryptography would not be possible without the parallel development of numerical methods enabling its effective analysis, such as the Navascués--Pironio--Acín (NPA) hierarchy~\cite{NPA1,NPA2}. One of its notable successes is its application to experimental data analysis, leading to the first generation of data certified to be random under the laws of quantum physics~\cite{Pironio10}.

The NPA hierarchy provides a sequence of outer approximations to the set of quantum correlations via semi-definite programming (SDP)~\cite{doi:10.1137/1038003,mironowicz2024semi}. Many research groups employ private implementations of this method~\cite{PiotrThesis}, which often have limited functionality due to the complexity of constructing NPA relaxations for an arbitrary number of parties and hierarchy levels. The only widely used implementations are the \texttt{ncpol2sdpa} package~\cite{ncpol2sdpa} by Peter Wittek and a Julia implementation by Erik Woodhead~\cite{woodhead2024quantumnpa}.

The NPA method characterizes correlations arising from quantum mechanics via moment matrices (see Sec.~\ref{ssec:NPA}). Its construction can be understood as describing the closure of correlations achievable with finite-dimensional quantum systems. Similar techniques appear in optimization problems where additional constraints are imposed on the underlying probability distributions. One such framework is the semi-device-independent approach~\cite{Marcin}, where stronger conclusions can be obtained under physically motivated assumptions, such as bounds on the Hilbert space dimension. Several methods inspired by NPA have been developed to analyze such scenarios~\cite{dimWit1,dimWit2}.

These optimization techniques belong to the class of convex optimization problems, which are typically more tractable numerically. At the same time, there is ongoing work on incorporating additional constraints, such as bounds on the rank of operators~\cite{Otfried}. However, such constraints are inherently non-convex, which makes their incorporation into SDP-based frameworks challenging~\cite{rank1,rank2}.

In this work, we present a flexible and easily implementable technique that allows one to assess quantum behaviors across a broad class of operational scenarios where the NPA hierarchy applies. Instead of explicitly deriving algebraic relations between entries of the moment matrix, we identify these constraints numerically by generating random instances of quantum states and measurements, constructing the associated moment matrices, and recording which entries are equal (within a prescribed numerical tolerance) or vanish. Thus, algebraic relations are replaced by empirically determined numerical equalities. In addition to Bell scenarios, our technique can be applied to other operational settings, such as contextuality scenarios and prepare-and-measure scenarios with bounded communication capacity.

The structure of the paper is as follows. In Section~\ref{sec:Preliminaries} we introduce the basic notions related to optimization over quantum probabilities and recall the NPA hierarchy. In Section~\ref{sec:SDP} we briefly review the framework of semi-definite programming. Section~\ref{sec:Result} contains the main results of the paper, including the proposed moment matrix generation technique and its validation through numerical tests. Finally, in Section~\ref{sec:Conclusions} we summarize our findings and discuss their implications. Technical details, definitions, and proofs are deferred to the appendix.

Our results suggest that, for moment matrices associated with quantum measurements, random realizations preserve precisely the relations imposed by the underlying algebraic structure. In particular, additional equalities arise only in specific degenerate cases, such as rank-$1$ representations.

\section{Preliminaries}
\label{sec:Preliminaries}

\subsection{Optimization over quantum probabilities}
\label{ssec:quantum_probabilities}

A problem that is very profound for quantum information theory is the study of a setup described in terms of some set of parties, where each party possesses a measuring device with some number of possible settings and can produce an outcome from some specified set of allowed outcomes. In this way of analyzing the problem, the possible joint conditional probability distributions of outcomes obtained by individual parties, conditioned on the settings each party can input into their measuring device, are considered.

At this point, it is necessary to specify which types of devices are considered, as this will affect the range of joint probability distributions that can be obtained.

A certain class of probability distributions is the so-called local distributions with hidden variables. They are defined as the set of all distributions obtainable as follows. All the devices of the parties share some random variable. The device of each of the parties for a given setting provides a random outcome according to some probability distribution, depending on the setting and the shared random variable. One can derive that the set of probabilities that can be generated in this way is given by:
\begin{equation}
	\label{eq:LHV}
	P(a,b|x,y) = \sum_\lambda p(\lambda) P^{(A)}(a|x,\lambda) P^{(B)}(b|y,\lambda),
\end{equation}
where $P^{(A)}(a|x,\lambda)$, $P^{(B)}(b|y,\lambda)$ and $p(\lambda)$ are probability distributions of Alice, Bob, and the shared variable, respectively.

Another class is the so-called non-signaling probability distributions. These probabilities are restricted to satisfy the non-signaling condition stating that:
\begin{equation}
	\label{eq:NS}
	P(a|x) = P(a|x,y) \equiv \sum_b P(a,b|x,y), \ \bigforall a,x,y.
\end{equation}
This condition expresses the fact that, if the devices are separated in space, then their operation cannot allow for instantaneous signaling, which would be the case if the local probabilities of one party depended on the setting of another party.

Quantum distributions are defined as all probability distributions obtainable by quantum measurements performed by each of the devices, each of which has access to some sub-system of a shared quantum state. The assumption that each device only acts on its own sub-system ensures that the non-signaling condition~\eqref{eq:NS} is satisfied. Algebraically, this is expressed by the fact that measurement operators of different parties commute. On the other hand, operators acting on the same sub-space need not commute, meaning that the result of their action may depend on the order in which they are applied.

When analyzing quantum distributions, various special cases are considered, depending on the dimensions of the sub-spaces of each party. In general, when one speaks about quantum probability distributions without assumptions about their physical realization (often called boxes or behaviors), it is assumed that the underlying quantum systems can be of arbitrary, possibly infinite dimension. In this nomenclature, one speaks about local (or classical), non-signaling, or quantum boxes, respectively.

One of the key results for quantum information theory was the observation that some joint probability distributions obtainable in quantum mechanics cannot be reproduced using local boxes~\cite{EPR35,Bell,CHSH,Aspect}. It was also observed that, to determine whether a given box is contained in one of the mentioned classes, one needs to perform different types of calculations, often involving optimization over a set of boxes. In that case, one assigns a variable to each event and constructs a target function as a linear combination of such variables.

In the case of non-signaling boxes, this is straightforward, as it suffices to verify whether a given box satisfies the conditions~\eqref{eq:NS}. Consequently, the problem of optimization over non-signaling boxes can be solved using linear programming (LP)~\cite{murty1983linear}, where the non-signaling conditions are imposed as constraints.

In the case of local boxes, one can observe that any such joint probability distribution is a convex combination of deterministic boxes, i.e., boxes where the outcomes for each setting of each party are pre-determined~\cite{LHV}. Thus, the problem of optimization over local boxes is also an LP, which in practice requires enumeration of all deterministic boxes. The number of such boxes grows rapidly with the parameters of the setup.

Simultaneously, it can be shown that the problem of optimization over quantum boxes is NP-hard~\cite{Gurvits,DPS,Kempe}. Fortunately, even though one cannot optimize exactly over this set, relaxations can be defined using the NPA hierarchy~\cite{NPA1,NPA2}, which are tractable via SDP.

\subsection{Navascues-Pironio-Acin hierarchy}
\label{ssec:NPA}

The core idea in defining the NPA hierarchies is the notion of the so-called moment matrices~\cite{lasserre2010moments}. Their concept was introduced earlier in probability theory, where a moment matrix is defined as a square symmetric matrix containing moments of a probability distribution. In moment matrices the rows and columns are indexed by monomials.

In the NPA method, the monomials involve non-commuting variables, with each variable related to a quantum measurement operator. To keep the notation simple, we restrict our considerations to bi-partite probability distributions, with two parties Alice and Bob. Let us denote by $A^a_x$ the projector related to the outcome $a$ of Alice for the setting $x$, and similarly for Bob $B^b_y$ for his setting $y$ and outcome $b$. Let $\ket{\psi}$ be the quantum state they access. Then, according to the formalism of quantum theory, their joint probability of obtaining the outcomes $a$ and $b$ for the settings $x$ and $y$ is given by
\begin{equation}
	\label{eq:Pabxy}
	P(a,b|x,y) = \bra{\psi} A^a_x B^b_y \ket{\psi}.
\end{equation}
Let $N^A(x)$ be the number of outcomes of Alice for the setting $x$, and $N^B(y)$ for Bob. The definition of quantum measurements requires
\begin{equation}
	\label{eq:sumIdentity}
	\bigforall x, \sum_{a = 1}^{N^A(x)} A^a_x = \openone,
\end{equation}
and the same for Bob.

For any quantum probability distribution $\{P(a,b|x,y)\}_{a,b,x,y}$, any set of quantum state and measurements which reproduce it via~\eqref{eq:Pabxy} is called a \textit{quantum realization} of the distribution.

Consider a sequence of operators $S = O_1 O_2 \cdots O_k$, where $O_i \in \mathcal{E} \equiv \{A^a_x\}_{a,x} \cup \{B^b_y\}_{b,y}$. For a given set of sequences $\mathcal{S} = \{S_k\}_{k}$, let us define a moment matrix $\Gamma$ indexed by $\mathcal{S}$ related to $\ket{\psi}$, $\{A^a_x\}_{a,x}$ and $\{B^b_y\}_{b,y}$ as:
\begin{equation}
\label{Gamma_entry}
	\Gamma_{k_1, k_2} \equiv \bra{\psi} S_{k_1}^{\dagger} S_{k_2} \ket{\psi}.
\end{equation}
It follows that
\begin{equation}
	\label{eq:Gammak1k2}
	\Gamma_{k_1, k_2} = \Gamma_{l_1, l_2}
\end{equation}
whenever
\begin{equation}
	\label{eq:GammaEqs}
	S_{k_1}^{\dagger} S_{k_2} = S_{l_1}^{\dagger} S_{l_2},
\end{equation}
i.e., whenever the corresponding operator products are equal.

Now, consider the set $\mathcal{S}_n$ defined as the set of all sequences of operators from $\mathcal{E}$ of length at most $n$, where the sequence of length $0$ is $\openone$. A moment matrix of level $n$ is the $\Gamma$ matrix indexed by $\mathcal{S}_n$.

One often defines intermediate levels for sequences, e.g.
\begin{equation}
	\mathcal{S}_{1+AB} \equiv \mathcal{S}_1 \cup \left\{ A^a_x B^b_y \right\}_{x,y, a \neq N^A(x), b \neq N^B(y)}.
\end{equation}
One can observe that because of~\eqref{eq:sumIdentity} the operators in $\mathcal{S}_n$ are not linearly independent. Consider a reduced set of operator sequences
\begin{equation}
	\label{eq:reducedOperators}
	\hat{\mathcal{E}} \equiv \left( \{A^a_x\}_{a,x} \cup \{B^b_y\}_{b,y} \right) \setminus \left\{ A^{N^A(x)}_x, B^{N^B(y)}_y \right\}_{x,y},
\end{equation}
where the operators in the latter set can be obtained as linear combinations of the former using~\eqref{eq:sumIdentity}.
Moment matrices built from the reduced set of operators $\hat{\mathcal{E}}$ will be referred to as the minimal form of the moment matrix, or simply reduced moment matrix. The reduced moment matrix is constructed analogously, with monomials involving projectors corresponding to the last outcome for each setting removed. This does not change the properties of the moment matrix, since the removed rows and columns are determined by the remaining ones. The minimal form requires fewer resources in numerical applications and typically has a better condition number.

A direct property of the moment matrix $\Gamma$ is its positive semi-definiteness. This follows from the fact that each entry $\Gamma_{k_1, k_2}$ is an inner product of vectors $S_{k_1} \ket{\psi}$ and $S_{k_2} \ket{\psi}$~\cite{Boyd04}.
Furthermore, from normalization, the element corresponding to $\bra{\psi} \openone \ket{\psi}$ is equal to $1$.
By construction, some elements of the moment matrix must be equal, which is implied by commutation and orthogonality properties of the projectors.

The commutation property is given by $\bigforall a,b,x,y$,
\begin{equation}
	 [A^a_x,B^{b}_{y}] = 0,
\end{equation}
while orthogonality is captured by $\bigforall x,a,a'$ such that $a\neq a'$,
\begin{equation}
	A^a_x A^{a'}_{x} = 0,
\end{equation}
and similarly for Bob.

So, NPA allows one to transform optimization problems of the form
\begin{equation}
	\max_{\{P(a,b|x,y)\}_{a,b,x,y} \in \mathbf{Q}} \sum_{a,b,x,y} c_{a,b,x,y} P(a,b|x,y),
\end{equation}
with $\mathbf{Q}$ being the set of quantum probability distributions and $c_{a,b,x,y} \in \mathbf{R}$, into relaxations of the form
\begin{equation}
	\max_{\{P(a,b|x,y)\}_{a,b,x,y} \in \mathbf{Q_n}} \sum_{a,b,x,y} c_{a,b,x,y} P(a,b|x,y),
\end{equation}
where $\mathbf{Q_n}$ denotes the set of behaviors compatible with a moment matrix $\Gamma^{(n)}$ of level $n$ satisfying
\begin{equation}
	\Gamma^{(n)} \geq 0,
\end{equation}
and
\begin{equation}
	\Gamma^{(n)}_{0,0} = 1.
\end{equation}

Note that these problems reduce to membership problems when $c_{a,b,x,y} = 0, \ \bigforall a,b,x,y$.

\section{Semi-definite programming}
\label{sec:SDP}

One of the breakthrough moments in the development of modern methods of numerical optimization was the formulation and development of linear programming (LP). LP is an optimization method in which a number of variables are considered, subject to linear constraints, and the quantity to be optimized, called the objective function, is linear in these variables. The first formulation of this technique was given in 1939 by Leonid Kantorovich~\cite{kantorovich1940method}.

The first numerical method to solve such optimization problems, the so-called simplex method, was given in 1947 by George B. Dantzig~\cite{dantzig1951maximization}. Soon after, Dantzig and John von Neumann developed the foundations of the theory of duality~\cite{10.2307/2296111}, pointing out that for any LP one can construct a second problem, the so-called dual, in which the elements of the initial problem (called primal) play different roles, while both problems share the same optimal value. LP reached a mature form in 1984 after Narendra Karmarkar introduced the interior point method~\cite{Karmarkar}, allowing efficient solution of such problems.

Semi-definite programming (SDP) is an optimization technique that generalizes LP. The primal problem has the form:
\begin{equation}
\begin{aligned}
	\text{minimize } & \Tr(C X) \\
	\text{subject to } & \Tr(A_i X) = b_i, \ \text{for } i = 1, \cdots, m, \\
	& X \succeq 0,
\end{aligned}
\end{equation}
The dual problem has the form:
\begin{equation}
\begin{aligned}
	\text{maximize } & b^{T} y \\
	\text{subject to } & C - \sum_{i=1}^{m} y_i A_i = Z, \\
	& Z \succeq 0.
\end{aligned}
\end{equation}
In the above, symmetric matrices $C \in \mathbb{S}^n$ and $A_1, \cdots, A_m \in \mathbb{S}^n$, and vector $b \in \mathbb{R}^m$, specify the SDP problem.

\section{Results}
\label{sec:Result}

The following result relies on Conjecture~\ref{conjecture1}, which is strongly supported by numerical evidence (see Appendix~\ref{app:randomization_results}).

\subsection{Moment matrix generation technique}
\label{ssec:technique}

Our technique consists in generating the list of constraints satisfied by the moment matrix associated with a given problem by randomly sampling a single moment matrix and recording which entries are equal to each other and which entries are equal to zero.

The random generation of the moment matrix associated with a problem is performed by randomly generating a quantum state shared by the parties involved in the communication task, and a set of projectors for each measurement setting of each of the agents. When randomly sampling projectors, in order to avoid constraints related to the rank of the generated projectors, it is sufficient to consider projectors of rank at least $2$. However, as shown in Appendix~\ref{app:appendixLemma}, in some cases even projectors of rank $1$ do not introduce additional constraints associated with their rank.

In Appendix~\ref{app:appendixLemma} we derive a lemma stating the conditions under which two entries of a moment matrix are equal, in the case when the associated projectors and quantum state are randomly generated. As a consequence, we obtain the following result:

\begin{result}
\label{Res1}
	For any NPA problem, the constraints on the moment matrix that follow from its definition~\eqref{eq:Gammak1k2}, i.e.\ from the fact that each entry has the form~\eqref{Gamma_entry}, can be identified from a moment matrix associated with a randomly generated quantum realization with probability $1$, provided that at least one of the following conditions is satisfied:

    \begin{enumerate}
	\item Calling $L$ the level of the NPA method,
	\begin{equation}
		\label{lev3}
		L < 3.
	\end{equation}
    \item 
	\begin{equation}
		r > 1,
	\end{equation}
    where $r$ is the rank of the measurements in the randomly generated quantum realization.
    \item No agent receives either:
	\begin{itemize}
		\item at least $3$ inputs each associated with at least $2$ outputs, or
		\item $2$ inputs associated with at least $2$ and $3$ outputs, respectively.
	\end{itemize}
    \end{enumerate}
\end{result}

\begin{proof}
	In Lemma~\ref{conj:probability_equality} presented in the appendix, in which it is assumed Conjecture~\ref{conjecture1}, we prove that, when $r>1$, picking equalities among the entries of a randomly generated moment matrix does not, with probability $1$, produce additional constraints beyond those obtainable algebraically using only properties of projectors, as described in Lemma~\ref{lem:moment_entry_equality}. On the other hand, randomly generated projectors satisfy all algebraic relations by construction.
	
	When $r=1$, Lemma~\ref{conj:probability_equality} implies that additional equalities may emerge from the associated moment matrix. In particular, Lemma~\ref{short_homogeneous}, together with Lemma~\ref{conj:probability_equality}, implies that, in order for such additional equalities to appear, products of at least $5$ projectors associated with the same party must occur within a single entry of the moment matrix. This requires at least $L=3$.
	
	Furthermore, Lemma~\ref{smallest}, together with Lemma~\ref{conj:probability_equality}, implies that these additional equalities arise only when an agent receives either:
	\begin{itemize}
		\item at least $3$ inputs, each associated with at least $2$ outputs, or
		\item $2$ inputs associated with at least $2$ and $3$ outputs, respectively.
	\end{itemize}
\end{proof}

\subsection{Testing the Tool}
\label{sec:testing_tool}

For any Bell scenario and any level of the NPA hierarchy, there exists a characteristic set of equalities implied by the NPA constraints. This set, determined by the algebra of the problem, can be obtained as an output of \texttt{ncpol2sdpa}. To showcase the efficacy of our method, we compared this characteristic set of equalities with the set of equalities extracted from a randomly generated moment matrix for a broad range of Bell scenarios and levels of the NPA hierarchy (see Table~\ref{tabRes}). 

\begin{table}[htbp]
  \centering
  \caption{Distinct entries in the level $3$ NPA moment matrix for Bell scenarios $(X,Y,A,B)$. Counts from algebraic constraints (identical to those from randomly sampled rank-$2$ projectors) are compared with counts from rank-$1$ projectors. Rank-$1$ sampling induces extra equalities (fewer unique entries) in every case except CHSH $(2,2,2,2)$, where both counts coincide.}
  
  \label{tab:equality-counts}
  \begin{tabular}{|c|c|c|c|p{2.5cm}|p{2.5cm}|}
    \hline
    $X$ & $Y$ & $A$ & $B$ & \# unique entries (algebraic \& rank-2) & \# unique entries (rank-1) \\
    \hline
    2 & 2 & 2 & 2 & 61   & 61   \\
    2 & 2 & 2 & 3 & 422  & 410  \\
    2 & 2 & 3 & 3 & 1449 & 1412 \\
    2 & 3 & 2 & 2 & 319  & 292  \\
    2 & 3 & 2 & 3 & 7048 & 6495 \\
    2 & 3 & 3 & 2 & 1122 & 1077 \\
    2 & 3 & 3 & 3 & 12531 & 11919 \\
    3 & 3 & 2 & 2 & 868  & 808  \\
    3 & 3 & 3 & 2 & 10438 & 9822 \\
    3 & 3 & 3 & 3 & 38017 & 36717 \\
    \hline
  \end{tabular}
  \label{tabRes}
\end{table}

In all cases reported in Table~\ref{tabRes}, when the set of equalities is extracted from a randomly generated moment matrix with rank-$2$ projectors, the sets exactly match, demonstrating the robustness of our method.

We also compared the set of equalities extracted from a randomly generated moment matrix with rank-$1$ projectors. We found that, for level $3$ of the NPA hierarchy, for all Bell scenarios for which condition $3$ of Result~\ref{Res1} is violated, additional equalities arise compared to those obtained from \texttt{ncpol2sdpa}. One prominent example of this effect is level $3$ of the NPA hierarchy for the $(3,3,2,2)$ Bell scenario.

\section{Conclusions}
\label{sec:Conclusions}

In this work, we have presented an alternative technique for identifying the constraints implied by the NPA hierarchy. The proposed approach is simple to implement, computationally efficient, and easily adaptable to a wide range of scenarios.

The main idea is to replace the explicit derivation of algebraic relations with their numerical identification via random sampling of quantum realizations. We have shown that, under broad conditions, this procedure recovers exactly the constraints imposed by the NPA hierarchy with probability $1$. This provides a practical and flexible tool for constructing moment matrices without relying on a detailed algebraic analysis.

We have also analyzed the limitations of the method. In particular, we identified the precise conditions under which additional, non-algebraic equalities may appear. These arise in the case of rank-$1$ projectors and are directly linked to the structure of homogeneous blocks of projectors. Our results characterize the minimal scenarios in which such effects occur.

Finally, the presented framework allows for a more refined analysis of scenarios involving constraints on the rank of measurement operators. In particular, it enables a systematic study of rank-$1$ measurements, which are relevant in various semi-device-independent settings. Overall, the proposed method offers a conceptually simple and practically effective alternative to standard algebraic constructions in the study of quantum correlations.

Beyond the specific application to the NPA hierarchy, our results indicate that random realizations of quantum states and measurements faithfully reflect the algebraic structure of the associated moment matrices. In particular, equalities between entries are, with high probability, exactly those dictated by the algebra of the underlying operators, with additional equalities appearing only in well-characterized degenerate cases such as rank-$1$ representations. This perspective suggests that random sampling provides a reliable and practical tool for identifying the intrinsic structure of moment matrix constraints. An interesting open question is whether the observed correspondence between algebraic and numerically identified constraints can be established rigorously in full generality, and to what extent similar phenomena occur for other classes of operator relations and hierarchies beyond the NPA framework.

\begin{acknowledgments}
GV acknowledges support from the Ministry of Culture and Science of North Rhine-Westphalia via the NRW Rückkehrprogramm.
\end{acknowledgments}

\newpage
\onecolumngrid

\appendix

In this appendix, we provide the technical foundations underlying the results presented in the main text. We begin in Appendix~\ref{app:random_state} by describing the procedures used to generate random quantum states and measurements. In Appendix~\ref{app:notation} we introduce the notation and formal definitions used throughout the analysis. Appendix~\ref{app:algebraic} characterizes algebraic equalities between entries of the moment matrix. In Appendix~\ref{app:random_generation} we formalize the random generation model and express moment matrix entries as functions of the underlying random parameters. In Appendix~\ref{app:randomization_results} we present numerical evidence together with intermediate results on the behavior of moment matrix entries under random sampling. Finally, in Appendix~\ref{app:appendixLemma} we state the main probabilistic conjecture underlying the results used in the paper.

\section{Random generation of quantum states and measurements}
\label{app:random_state}

\begin{algorithm}[H]
	\begin{algorithmic}[1]
		\Require Dimension $d \in \mathbb{N}$
		\Ensure A random density matrix $\rho \in \mathbb{C}^{d \times d}$
		\State Sample two $d \times d$ matrices $R_1, R_2$ with i.i.d. standard normal entries
		\State Form the complex matrix $M \gets R_1 + i R_2$
		\State Compute $\Gamma \gets M M^\dagger$ \Comment{Ensures positive semi-definiteness}
		\State Normalize: $\rho \gets \Gamma / \Tr(\Gamma)$
		\State \Return $\rho$
	\end{algorithmic}
	\caption{Sampling Haar-Random Density Matrix~\cite{Zyczkowski2011}}
	\label{alg:HaarDensity}
\end{algorithm}

A key component in benchmarking and testing SDP-based quantum tools is the ability to generate random quantum states and measurements. In our implementation, we generate random density matrices using a unitarily invariant construction based on the Haar measure. Algorithm~\ref{alg:HaarDensity} describes the procedure used to generate random quantum states of dimension $d$. The randomization procedure of $R_1$ and $R_2$ can be implemented using the function \texttt{randn(d,d)} available natively in MATLAB or in the \texttt{numpy} library in Python.

This method produces density matrices distributed according to the induced (Hilbert--Schmidt) measure, which is unitarily invariant. This approach is particularly useful for generating random instances in simulation-based validation of quantum information protocols and SDP relaxations such as those used in the NPA hierarchy.

\subsection*{Random Generation of Rank-$r$ Projectors in Dimension $d$}
\label{app:random_proj}

We describe a generic procedure to generate a set of orthogonal projectors of rank-$r$ in a Hilbert space $\mathcal{H}$ of dimension $d$, assuming that $d$ is a multiple of $r$. This construction produces a collection of $n = \frac{d}{r}$ projectors $\{ P_j \}_{j=0}^{n-1}$ satisfying:
\begin{equation}
	P_j^2 = P_j, 
	\quad 
	P_j P_k = 0 \ \text{for } j \neq k,
	\quad 
	\sum_{j=0}^{n-1} P_j = \openone,
\end{equation}
where each projector has rank $r$. 

\begin{algorithm}[H]
	\begin{algorithmic}[1]
		\Require Number of outcomes $n \in \mathbb{N}$, desired rank $r \in \mathbb{N}$.
		\Ensure A set $\{P_j\}_{j=0}^{n-1}$ of $n$ mutually orthogonal projectors on $\mathbb{C}^{d}$, each of rank $r$, where $d = nr$, satisfying $P_j P_k = \delta_{jk} P_j$ and $\sum_{j=0}^{n-1} P_j = \openone$.
		\State Set the Hilbert space dimension $d \gets n r$.
		\State Sample a Haar-random unitary $U \in \mathbb{C}^{d\times d}$ using Algorithm~\ref{alg:haar_unitary_qr}.
		\For{$j = 0,1,\dots,n-1$}
		  \State Define the rank-$r$ projector onto the $j$-th block of $r$ columns of $U$:
		  \[
		    P_j \gets \sum_{i=1}^{r} \proj{u_{i+jr}},
		  \]
		  \Statex \hspace{1.7em}where $\{ \ket{u_\ell} \}_{\ell=1}^{d}$ are the orthonormal columns of $U$.
		\EndFor
		\State \Return $\{P_j\}_{j=0}^{n-1}$.
	\end{algorithmic}
	\caption{Haar-random rank-$r$ projective measurement with $n$ outcomes~\cite{ZyczkowskiSommers2001}}
	\label{alg:haar_random_rank_r_projective_measurement}
\end{algorithm}

\begin{algorithm}[H]
	\begin{algorithmic}[1]
		\Require Dimension $d \in \mathbb{N}$.
		\Ensure A Haar-random unitary $U \in \mathbb{C}^{d\times d}$.
		\State Draw a complex Ginibre matrix $G \in \mathbb{C}^{d\times d}$ with i.i.d.\ entries
		\[
		G_{ab} \sim \mathcal{N}(0,1) + i\,\mathcal{N}(0,1).
		\]
		\Statex \hspace{1.7em}\textit{Here, $\mathcal{N}(0,1)$ denotes the standard real normal distribution,}
		\Statex \hspace{1.7em}\textit{and the real and imaginary parts are sampled independently.}
		\State Compute the QR decomposition $G = QR$, with $Q$ unitary and $R$ upper triangular.
		\State Let $\Lambda \gets \mathrm{diag}(R)$ and define
		\[
		\phi_\ell \gets \frac{\Lambda_\ell}{|\Lambda_\ell|} \quad (\text{set } \phi_\ell \gets 1 \text{ if } \Lambda_\ell = 0), \quad \ell=1,\dots,d.
		\]
		\State Set $D \gets \mathrm{diag}(\overline{\phi_1},\dots,\overline{\phi_d})$ and output
		\[
		U \gets QD.
		\]
		\State \Return $U$.
	\end{algorithmic}
	\caption{Haar-random unitary via QR decomposition~\cite{Mezzadri2007}}
	\label{alg:haar_unitary_qr}
\end{algorithm}

This algorithm yields a set of $n = \frac{d}{r}$ orthogonal projectors of rank $r$ forming a resolution of the identity.

So, given the number of outputs $n$ associated with a specific input defined by the considered scenario, the above construction implies a minimum dimension $d = n r$ for the generated projectors. It is also possible to generate projectors in higher dimensions $d' > d$. In that case, one of the projectors, e.g.\ the last one, can be defined as the identity minus the sum of the remaining ones, and thus will have rank $r' > r$. This does not pose a problem for our purposes, since we consider moment matrices in which only projectors associated with $n-1$ outcomes appear. Consequently, all projectors explicitly present in the moment matrix still have rank $r$.

\begin{algorithm}[H]
	\begin{algorithmic}[1]
		\Require Hilbert space dimension $d$, desired rank $r$ with $r \mid d$.
		\Ensure A set of $n = d/r$ orthogonal projectors $\{P_j\}_{j=0}^{n-1}$ of rank $r$.
		\State Generate a random density matrix $\rho \in \mathbb{C}^{d \times d}$ using Algorithm~\ref{alg:HaarDensity}.
		\State Compute the eigenvectors $\{ \ket{v_i} \}_{i=1}^{d}$ of $\rho$ (they form an orthonormal basis of $\mathcal{H}$).
		\For{$j = 0, 1, \dots, n-1$}
		\State Construct the projector
		\begin{equation}
			P_j := \sum_{i=1}^{r} \proj{v_{i + jr}}.
		\end{equation}
		\EndFor
		\State \Return $\{P_j\}_{j=0}^{n-1}$.
	\end{algorithmic}
	\caption{Random generation of rank-$r$ orthogonal projectors in dimension $d$~\cite{NielsenChuang2010}}
	\label{alg:random_rank_r_projectors_2}
\end{algorithm}

Alternatively, one can generate sets of projectors using Algorithm~\ref{alg:random_rank_r_projectors_2}, which relies on the fact that a random density matrix generated as in Algorithm~\ref{alg:HaarDensity} is almost surely of full rank.

\section{Notation and definitions}
\label{app:notation}

We adopt the following notation and conventions throughout the paper.

\begin{definition}[Scenario sequence and scenario operator]
	\label{dfn:seq_of_projs}
	Let $m, r \in \mathbb{N}_{+}$ and let $\mathcal{H} = \bigotimes_{i=1}^{m} \mathcal{H}^{(i)}$ be a composite Hilbert space, where each $\mathcal{H}^{(i)}$ corresponds to the $i$-th party. Let $d = \dim(\mathcal{H})$.
	
	Let 
	\begin{equation}
		\mathbb{P} = \left( (P^{(i)}_{j})_{j=1}^{J_i} \right)_{i=1}^{m}
	\end{equation} 
	with all $J_i \in \mathbb{N}_{+}$, be a finite sequence of sequences of projectors acting on $\mathcal{H}$. We say that $\mathbb{P}$ is a \emph{scenario sequence} if the following conditions hold:
	\begin{enumerate}
		\item \textbf{Party assignment:} for each $i,j$, the operator $P^{(i)}_j$ acts non-trivially only on $\mathcal{H}^{(i)}$ (i.e.\ as $P^{(i)}_j \otimes \openone$ on the remaining subsystems);
		\item \textbf{Fixed rank:} each projector $P^{(i)}_j$ has rank exactly $r$ in the space on which it acts;
		\item \textbf{No last outcome:} the projectors $\{P^{(i)}_j\}_j$ do not include the projector associated with the last outcome, since it can be derived from the others through the completeness relation.
	\end{enumerate}
	
	We define the associated \emph{scenario operator} as
	\begin{equation}
		\mathbb{\hat{P}} = \left( \prod_{j=1}^{J_1} P^{(1)}_{j} \right) \otimes \cdots \otimes \left( \prod_{j=1}^{J_m} P^{(m)}_{j} \right),
	\end{equation} 
	acting on the Hilbert space $\mathcal{H} = \mathcal{H}^{(1)} \otimes \cdots \otimes \mathcal{H}^{(m)}$.
\end{definition}

In the context of randomly generated quantum states and measurements, the quantities appearing in Definition~\ref{dfn:seq_of_projs} are interpreted as follows: $r$ denotes the rank of each generated projector, $d$ is the dimension of the quantum state $\rho$, and $m$ is the number of parties involved in the scenario.

The moment matrices considered in this work are constructed by omitting the projectors corresponding to the final outcomes of each measurement, leveraging the fact that the full set of projectors for a given measurement sums to the identity.

\begin{definition}[Blocks of projectors]
	\label{dfn:block_of_projs}
	Let $\mathbb{P}$ be a scenario sequence. For each party $i$, define $\mathbb{P}^{(i)}$ as the subsequence of $\mathbb{P}$ consisting of all projectors assigned to party $i$, equipped with the order induced from $\mathbb{P}$. 
	
	We say that $\mathbb{P}^{(i)}$ is \emph{simplified} if for every pair of adjacent projectors in the sequence:
	\begin{enumerate}
		\item they are non-orthogonal, i.e.\ $P^{(i)}_j P^{(i)}_{j+1} \neq 0$;
		\item they are non-identical, i.e.\ $P^{(i)}_j \neq P^{(i)}_{j+1}$.
	\end{enumerate}
	
	If $\mathbb{P}^{(i)}$ is simplified, then we call it the \emph{$i$-th block of projectors} of the scenario sequence $\mathbb{P}$. We denote by $P^{(i)}_j$ the $j$-th projector in this block.
\end{definition}

Note that $P^{(i)}_j$ refers to the $j$-th element of the $i$-th block, and not to the $j$-th projector in the full scenario sequence $\mathbb{P}$. The condition of simplification implies that the operator associated with the block,
\begin{equation}
\label{eq:scenario_operator}
	\hat{\mathbb{P}}^{(i)} := P^{(i)}_1 P^{(i)}_2 \cdots P^{(i)}_{J_i},
\end{equation}
cannot be represented by a strictly shorter subsequence of projectors from $\mathbb{P}$ that yields the same operator. In other words, the block is minimal with respect to algebraic simplifications induced by orthogonality and idempotency.

\begin{definition}[Norm of a block of projectors]
Let $\mathbb{P}^{(i)}$ be the $i$-th simplified block of projectors. We define its \emph{norm} (or length) as
\begin{equation}
	\Ab{\mathbb{P}^{(i)}} := 
	\begin{cases}
		0 & \text{if } \hat{\mathbb{P}}^{(i)} = \openone, \\
		J_i & \text{otherwise}.
	\end{cases}
\end{equation}
\end{definition}

This norm captures the minimal number of projectors needed to represent the action of a block.

\begin{definition}[Multiset of consecutive pairs]
	Let $\mathbb{P} := (P_0, \dots, P_{n-1})$ be an ordered sequence of projectors. We define the \emph{multiset of consecutive pairs} of $\mathbb{P}$, denoted $\mathcal{C}(\mathbb{P})$, as the multiset (with multiplicities)
	\begin{equation}
		\mathcal{C}(\mathbb{P}) := \{ (P_i, P_{i+1}) \mid i \in \{0, \dots, n-2\} \}.
	\end{equation}
\end{definition}

\begin{definition}[Homogeneous blocks]
	\label{dfn:homogeneous_blocks}
	Let $\mathbb{A}^{(i)} = (A_0, \dots, A_{n-1})$ and $\mathbb{B}^{(i)} = (B_0, \dots, B_{n-1})$ be two simplified blocks of projectors associated with the $i$-th party. We say that $\mathbb{A}^{(i)}$ and $\mathbb{B}^{(i)}$ are \emph{homogeneous} if the following conditions hold:
	\begin{enumerate}
		\item $\Ab{\mathbb{A}^{(i)}} = \Ab{\mathbb{B}^{(i)}}$;
		\item $A_0 = B_0$ and $A_{n-1} = B_{n-1}$;
		\item $\mathcal{C}(\mathbb{A}^{(i)}) = \mathcal{C}(\mathbb{B}^{(i)})$ as multisets.
	\end{enumerate}
\end{definition}

The following example illustrates the notion of homogeneous blocks.

\textbf{Example 1.} Consider the following two simplified blocks of projectors (we omit the party index for simplicity), both of length $5$:
\begin{equation}
	\mathbb{P} = (P_0, P_1, P_0, P_2, P_0),
\end{equation}
\begin{equation}
	\mathbb{P}' = (P_0, P_2, P_0, P_1, P_0).
\end{equation}
The multiset of consecutive pairs for $\mathbb{P}$ is
\begin{equation}
	\mathcal{C}(\mathbb{P}) = \{ (P_0, P_1), (P_1, P_0), (P_0, P_2), (P_2, P_0) \},
\end{equation}
and for $\mathbb{P}'$:
\begin{equation}
	\mathcal{C}(\mathbb{P}') = \{ (P_0, P_2), (P_2, P_0), (P_0, P_1), (P_1, P_0) \}.
\end{equation}
We observe that both blocks have the same length, the same first and last projectors, and the same multiset of consecutive pairs. Therefore, by Definition~\ref{dfn:homogeneous_blocks}, the blocks $\mathbb{P}$ and $\mathbb{P}'$ are homogeneous.

\begin{lem}
	\label{short_homogeneous}
    The shortest length of two homogeneous simplified blocks of projectors, not identical as sequences, is $5$, and at least $3$ different projectors are required.
\end{lem}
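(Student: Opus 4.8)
\emph{Proof proposal.} The plan is to translate the combinatorial content of a simplified block into an Eulerian trail of a loopless directed multigraph, and then exploit the rigidity of such trails in very small graphs. To a simplified block $\mathbb{P}=(P_0,\dots,P_{n-1})$ I associate the directed multigraph $G(\mathbb{P})$ whose vertices are the distinct projectors occurring in $\mathbb{P}$ and whose edge multiset is $\mathcal{C}(\mathbb{P})$, the pair $(P_i,P_{i+1})$ being an edge directed from $P_i$ to $P_{i+1}$. Since a simplified block has $P_i\neq P_{i+1}$, the graph $G(\mathbb{P})$ has no loops; and by construction $\mathbb{P}$ traverses each edge of $G(\mathbb{P})$ exactly once, so $\mathbb{P}$ is an Eulerian trail of $G(\mathbb{P})$ starting at $P_0$, and $G(\mathbb{P})$ is connected. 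Comparing with Definition~\ref{dfn:homogeneous_blocks}, two simplified blocks $\mathbb{A},\mathbb{B}$ are homogeneous exactly when $|\mathbb{A}|=|\mathbb{B}|=:n$, $G(\mathbb{A})=G(\mathbb{B})=:G$ as edge-multisets, and $A_0=B_0$. Hence a pair of distinct homogeneous simplified blocks of length $n$ is precisely a loopless connected directed multigraph $G$ with $n-1$ edges admitting two distinct Eulerian trails from one common start vertex, and the projectors appearing in the blocks are exactly the vertices of $G$. (Once the start is fixed, the in/out-degree imbalances of $G$ also fix the end, so the second half of condition~2 of Definition~\ref{dfn:homogeneous_blocks} is automatic and need not be used.)

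First I would settle the number of projectors. If $G$ has one vertex it has no edges (no loops), so $n=1$ and $\mathbb{A}=\mathbb{B}$. If $G$ has two vertices, then at either vertex all out-edges point to the other vertex and are therefore interchangeable, so an Eulerian trail is forced step by step once the start is chosen; again $\mathbb{A}=\mathbb{B}$. Consequently any two distinct homogeneous simplified blocks involve at least three distinct projectors, and these are linearly independent by condition~(3) of Definition~\ref{dfn:seq_of_projs}.

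Next I would exclude $n\le 4$. By the previous step I may assume $|V(G)|\ge 3$, so $G$ is a loopless connected directed multigraph with $n-1$ edges and at least three vertices; since connecting three vertices needs at least two edges, $n-1\in\{2,3\}$. If $n-1=2$, $G$ is underlain by a path on three vertices. If $n-1=3$, $G$ has cycle rank $1$, so underlying it is either a triangle or a pair of parallel edges with one further pendant edge attached. I would then run through the finitely many orientations of these underlying shapes that admit an Eulerian trail at all --- discarding those violating the in/out-degree balance condition --- and check in each that the trail from a fixed start vertex is uniquely determined, because at every visited vertex the available out-edges are mutually parallel or else any alternative first move isolates some edges before all are used. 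This forces $\mathbb{A}=\mathbb{B}$ when $n\le 4$, hence $n\ge 5$. I expect this bounded enumeration --- the uniqueness check for the handful of small graphs --- to be the only genuinely tedious step; alternatively it follows from the BEST theorem applied to $G$ with an extra edge closing the end back to the start, which makes the count of Eulerian trails visibly equal to one.

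Finally, sharpness comes from Example~1. The blocks $\mathbb{A}=(P_0,P_1,P_0,P_2,P_0)$ and $\mathbb{B}=(P_0,P_2,P_0,P_1,P_0)$ have equal length $5$, identical first and last entries $P_0$, and the same multiset of consecutive pairs $\{(P_0,P_1),(P_1,P_0),(P_0,P_2),(P_2,P_0)\}$, so they are homogeneous; they differ at positions $1$ and $3$, so they are not identical; they use exactly three (linearly independent) projectors; and they are simplified provided $P_0,P_1,P_2$ are pairwise distinct with $P_0P_1\neq0$ and $P_0P_2\neq0$, which holds for generic projectors of any fixed rank $r$ since $P_1$ and $P_2$ are never adjacent in either block. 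This attains both bounds and completes the proof.
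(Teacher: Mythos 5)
Your proposal is correct in substance but takes a genuinely different route from the paper. The paper argues directly on the sequences: it notes that two or fewer independent projectors force identical blocks, rules out lengths $3$ and $4$ by an explicit (unpublished) enumeration of all length-$4$ words in $3$ or $4$ independent projectors satisfying the simplification and homogeneity constraints, and then exhibits the length-$5$ pair from Example~1. You instead encode a simplified block as an Eulerian trail of the loopless directed multigraph whose edge multiset is $\mathcal{C}(\mathbb{P})$, so that a pair of distinct homogeneous blocks becomes a pair of distinct Eulerian trails of one small multigraph from a common start vertex; the lower bound on the number of projectors follows from the rigidity of trails on one or two vertices, and the exclusion of $n\le 4$ from uniqueness of trails in multigraphs with at most three edges. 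This buys a cleaner conceptual frame (the in/out-degree balance argument showing the common end vertex is automatic is a nice observation absent from the paper), reduces the tedious step to a graph enumeration that is easier to certify than enumerating words, and generalizes more readily to longer blocks; the paper's argument is more elementary and closer to the definitions but leaves its key length-$4$ enumeration implicit.

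One small repair is needed in your case analysis for $n-1=3$: you assert $G$ has cycle rank $1$, which presumes exactly three distinct projectors, whereas a length-$4$ block can use four distinct projectors, making $G$ a tree on four vertices (underlying path or star). These cases are harmless --- a directed star admits no Eulerian trail and a directed path has a unique one, so no distinct homogeneous pair arises, matching the paper's remark that four independent projectors in a length-$4$ block force identical sequences --- but they should be included in the enumeration for the argument to be complete. Your verification of sharpness via Example~1, including the genericity remark ensuring $P_0P_1\neq 0$ and $P_0P_2\neq 0$ so the blocks are indeed simplified, is slightly more careful than the paper's, which takes this for granted.
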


\begin{proof}
By definition, simplified blocks of projectors cannot contain consecutive equal elements, and the first and last corresponding elements of the blocks must be equal. Additionally, by hypothesis, the two simplified blocks must not be identical as sequences. Therefore, it is not possible to construct such blocks using fewer than $3$ distinct projectors.

Since at least $3$ different projectors are required, the minimum length of the blocks cannot be smaller than $3$. However, for length $3$, the conditions of having identical first and last elements and identical multisets of consecutive pairs imply that the two blocks must be identical as sequences, hence this case is excluded.

Let us consider blocks of length $4$. The number of distinct projectors cannot exceed $4$, and the case of $4$ distinct projectors leads to identical sequences due to the constraints on the first and last elements and the multiset of consecutive pairs. Thus, the only remaining possibility is to use $3$ distinct projectors. By enumerating all such sequences, one finds that no pair satisfies simultaneously the conditions of being simplified, homogeneous, and non-identical.

For length $5$, consider the simplified blocks
\begin{equation}
\label{P00}
	\mathbb{P} = (P_0, P_1, P_0, P_2, P_0),
\end{equation}
\begin{equation}
\label{P11}
	\mathbb{P}' = (P_0, P_2, P_0, P_1, P_0).
\end{equation}
These blocks are homogeneous and not identical as sequences, which concludes the proof.
\end{proof}

Next, we have the following:
\begin{lem}
	\label{smallest}
	The smallest scenarios in which homogeneous simplified blocks, not identical as sequences, appear in the moment matrix are the following two:
	\begin{enumerate}
	    \item a single agent receives $2$ inputs to which are associated $2$ and $3$ outcomes, respectively,
	    \item a single agent receives $3$ inputs to which are associated $2$ outcomes each.
	\end{enumerate}
\end{lem}

\begin{proof}
	First, we show that more than one measurement setting is required. By Lemma~\ref{short_homogeneous}, we need at least $3$ distinct projectors. In the single-agent case, if there is only one input with at least $4$ outcomes (so that at least $3$ different projectors appear in the moment matrix), then all simplified blocks consist either of a single projector or of the identity. Therefore, at least $2$ inputs are necessary.

	The next smallest scenario with at least $3$ projectors is the one in which the agent has $2$ inputs with $2$ and $3$ outcomes, respectively. In this case, referring to~\eqref{P00} and~\eqref{P11}, let $P_0$ denote a projector associated with the first input, and let $P_1$ and $P_2$ denote two projectors associated with the second input. Then one obtains two homogeneous simplified blocks.

	The same situation occurs for $3$ inputs with $2$ outcomes each, in which case there are again $3$ distinct projectors, which can be denoted by $P_0$, $P_1$, and $P_2$.
\end{proof}

We distinguish between equality of numerical values ($=$), algebraic equality ($\algeq$), and equality of operator representations ($\equiv$).

\begin{definition}[Algebraic equality]
\label{dfn:algebraic_equality}
Let $\mathcal{E} = \{P^{(i)}_j\}_{i,j}$ be a set of projectors satisfying the relations:
\begin{equation}
	P^{(i)}_j P^{(i)}_k = 0 \ \text{for } j \neq k, 
	\quad 
	(P^{(i)}_j)^2 = P^{(i)}_j,
\end{equation}
and
\begin{equation}
	[P^{(i)}_j, P^{(i')}_{k}] = 0 \quad \text{for } i \neq i'.
\end{equation}

We say that two operator expressions $X$ and $Y$ built from $\mathcal{E}$ are \emph{algebraically equal}, denoted $X \algeq Y$, if $X$ can be transformed into $Y$ using only the above relations.
\end{definition}

\begin{definition}[Representational equality]
	\label{dfn:representational_equality}
	Let $X$ and $Y$ be expressions constructed from projectors and quantum states as in Section~\ref{app:random_generation}. We say that $X$ and $Y$ are \emph{representationally equal}, and write
	\begin{equation}
		X \equiv Y,
	\end{equation}
	if, after substituting the parametrization defined by the random variable $\omega$ (see Eq.~\eqref{eq:omega_def}), the resulting operator-valued expressions coincide for all $\omega$.
\end{definition}

\begin{definition}[Numerical equality]
	\label{dfn:numerical_equality}
	Let $e_1[\omega]$ and $e_2[\omega]$ be two real-valued functions corresponding to entries of a moment matrix. We say that they are \emph{numerically equal}, and write
	\begin{equation}
		e_1[\omega] = e_2[\omega],
	\end{equation}
	if their values coincide for a given realization of the random parameter $\omega$.
\end{definition}

Algebraic equality implies representational equality, which in turn implies numerical equality for all $\omega$. The distinction between algebraic, representational, and numerical equality is essential for the analysis carried out in this work. Algebraic equality ($\algeq$) captures identities that follow solely from the intrinsic properties of projectors and therefore hold for all quantum realizations. Representational equality ($\equiv$) refers to identities that arise after substituting a specific parametrization of the operators and the state, and thus may depend on the structure of this parametrization (e.g.\ in the rank-$1$ case). Finally, numerical equality ($=$) concerns the coincidence of values for a particular realization of the random parameter $\omega$. In general, algebraic equality implies representational equality, which in turn implies numerical equality for all $\omega$, while the converse implications do not hold. This hierarchy is crucial in our approach: the NPA constraints correspond to algebraic equalities, while our method detects them through numerical equalities obtained from random sampling, relying on the fact that non-algebraic equalities occur only with probability $0$ under generic conditions.

\section{Algebraic equality for entries of the moment matrix}
\label{app:algebraic}

In the following lemma we state a condition under which two entries of a moment matrix are equal for algebraic reasons, i.e., due to identities between the corresponding operator products, as specified in Definition~\ref{dfn:algebraic_equality}.

\begin{lem}
	\label{lem:moment_entry_equality}
	Let $\rho$ be a quantum state on a Hilbert space $\mathcal{H} = \bigotimes_{i=0}^{m-1} \mathcal{H}^{(i)}$, and let $\mathbb{P}^{(i)}$ and $\mathbb{P}'^{(i)}$ be simplified blocks of projectors acting on $\mathcal{H}^{(i)}$ for each $i \in \{0, \dots, m-1\}$. Denote by $\hat{\mathbb{P}}^{(i)}$ and $\hat{\mathbb{P}}'^{(i)}$ the corresponding scenario operators as in~\eqref{eq:scenario_operator}. Define the moment matrix entries
	\begin{equation}
		e_1 := \Tr\left( \rho \, \hat{\mathbb{P}}^{(0)} \otimes \dots \otimes \hat{\mathbb{P}}^{(m-1)} \right),
	\end{equation}
	\begin{equation}
		e_2 := \Tr\left( \rho \, \hat{\mathbb{P}}'^{(0)} \otimes \dots \otimes \hat{\mathbb{P}}'^{(m-1)} \right).
	\end{equation}
	Then
	\begin{equation}
		e_1 \algeq e_2 
		\quad \iff \quad 
		\hat{\mathbb{P}}^{(i)} \algeq \hat{\mathbb{P}}'^{(i)} 
		\quad \bigforall i \in \{0, \dots, m-1\}.
	\end{equation}
\end{lem}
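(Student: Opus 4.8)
The plan is to prove the two implications separately, reading ``$e_1 = e_2$'' in the sense announced just before the lemma --- namely, that the equality holds \emph{for algebraic reasons}, i.e.\ that the operators $\hat{\mathbb{P}}^{(0)} \otimes \dots \otimes \hat{\mathbb{P}}^{(m-1)}$ and $\hat{\mathbb{P}}'^{(0)} \otimes \dots \otimes \hat{\mathbb{P}}'^{(m-1)}$ coincide in the algebra generated by the projectors subject only to idempotency, intra-setting orthogonality, and cross-party commutation (the last of which plays no role here, since the two entries are already grouped by party). The implication ``$\Leftarrow$'' is then immediate: if for every $i$ one has $\Ab{\mathbb{P}^{(i)}} = \Ab{\mathbb{P}'^{(i)}}$ and $P^{(i)}_j = P'^{(i)}_j$ for all $j$, the blocks coincide as sequences, so $\hat{\mathbb{P}}^{(i)} = \hat{\mathbb{P}}'^{(i)}$ for each $i$, hence the tensor operators agree and $e_1 = e_2$ for every $\rho$.

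For ``$\Rightarrow$'' I would argue the contrapositive: assume that for some party $i_0$ the simplified blocks $\mathbb{P}^{(i_0)}$ and $\mathbb{P}'^{(i_0)}$ differ as sequences (in norm or in some position), and deduce $e_1 \neq e_2$. First I reduce from the global tensor to the per-party operators. Since $\rho$ enters the entry only through the trace and is otherwise unconstrained, $e_1 = e_2$ algebraically is equivalent to $\bigotimes_i \hat{\mathbb{P}}^{(i)} = \bigotimes_i \hat{\mathbb{P}}'^{(i)}$; a nonzero product tensor factorizes uniquely up to scalars with product $1$, so $\hat{\mathbb{P}}^{(i)} = c_i \hat{\mathbb{P}}'^{(i)}$ for all $i$. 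Because in the quotient of the free projector algebra the normal-form monomials are linearly independent (see the next paragraph), a normal-form monomial can equal a scalar multiple of another normal-form monomial only trivially; since each $\hat{\mathbb{P}}^{(i)}$ and $\hat{\mathbb{P}}'^{(i)}$ is such a monomial --- the blocks are simplified --- this forces $c_i = 1$ and $\hat{\mathbb{P}}^{(i)} = \hat{\mathbb{P}}'^{(i)}$ for every $i$, in particular $\hat{\mathbb{P}}^{(i_0)} = \hat{\mathbb{P}}'^{(i_0)}$.

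It remains to prove the per-party statement: for one party, $\hat{\mathbb{P}}^{(i_0)} = \hat{\mathbb{P}}'^{(i_0)}$ forces the two simplified blocks to be identical as sequences. By linear independence (condition~3 of Definition~\ref{dfn:seq_of_projs}) the projectors of a fixed party may be treated as distinct generators, and --- after discarding the last outcome of each setting, as the construction does --- the only relations among them are $P^2 = P$ and $P^a_x P^{a'}_x = 0$ for $a \neq a'$, while projectors from different settings of the same party satisfy no relation and, in particular, do not commute. The rewriting rules $PP \to P$ and $P^a_x P^{a'}_x \to 0$ are terminating and locally confluent, hence confluent; by the diamond lemma the normal-form words --- those in which no two adjacent factors are equal or orthogonal, i.e.\ exactly the simplified blocks --- form a basis of the quotient algebra. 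Consequently two simplified blocks represent the same operator algebraically if and only if they are the same word, which yields $\Ab{\mathbb{P}^{(i_0)}} = \Ab{\mathbb{P}'^{(i_0)}}$ and $P^{(i_0)}_j = P'^{(i_0)}_j$ for all $j$, contradicting the assumption.

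The step I expect to be the main obstacle is making rigorous the claim that distinct simplified blocks yield distinct scenario \emph{operators} in the realizations of interest --- equivalently, that generic rank-$r$ projectors satisfy no relation beyond the three listed, so that the free-algebra normal form above is faithful. For $r \geq 2$ this is true and can be made precise either by exhibiting an explicit rank-$2$ realization that separates any prescribed pair of distinct simplified words, or by a transcendence-degree count on the parameters of the projectors. For $r = 1$ it is false: the homogeneous blocks $(P_0, P_1, P_0, P_2, P_0)$ and $(P_0, P_2, P_0, P_1, P_0)$ of Example~1 collapse to the same operator because $P_a P_b P_a$ is a scalar multiple of $P_a$. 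This is exactly why the lemma should be understood as a characterization of the equalities that hold \emph{for algebraic reasons}; the extra equalities produced by rank-$1$ sampling are those governed by Lemma~\ref{short_homogeneous} and are treated separately.
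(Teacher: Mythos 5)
Your proposal is correct under the reading you announce (equality ``for algebraic reasons'', i.e.\ identically in $\rho$ and for generic projectors), and it is substantially more rigorous than the paper's own justification, which consists of a short informal paragraph asserting that a simplified block ``cannot be shortened, permuted, or modified in any way without altering its action'' and that independence of the projectors then forces block-wise coincidence. Where the paper simply asserts this, you supply the two missing ingredients: (a) the reduction from $e_1=e_2$ for all $\rho$ to equality of the tensor operators and then, by unique factorization of nonzero product tensors up to scalars, to per-party equalities $\hat{\mathbb{P}}^{(i)} = c_i\,\hat{\mathbb{P}}'^{(i)}$; and (b) a diamond-lemma argument showing that the rewriting system $PP \to P$, $P^a_x P^{a'}_x \to 0$ is terminating and confluent, so the simplified words form a linear basis of the quotient algebra and two distinct simplified blocks can never represent the same element (nor proportional elements), which kills the scalars $c_i$ and closes the contrapositive. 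You also correctly identify the one point the paper glosses over and that is in fact the substantive content: faithfulness of the concrete realization, i.e.\ that the sampled projectors satisfy no relations beyond idempotency, intra-setting orthogonality and cross-party commutation. Your observation that this fails for $r=1$ (homogeneous blocks such as $(P_0,P_1,P_0,P_2,P_0)$ and $(P_0,P_2,P_0,P_1,P_0)$ collapse to the same operator because $P_aP_bP_a \propto P_a$ for rank one) is exactly the caveat the paper only handles later, in Lemma~\ref{lem:probability_equality} and Lemma~\ref{short_homogeneous}; so your reading of the present lemma as a characterization of purely algebraic equalities, with the rank-$1$ exceptions deferred, matches the paper's intent while making explicit a hypothesis (effectively $r\geq 2$, or genericity in the free-algebra sense) that the paper's statement and proof leave implicit. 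If you write this up, do spell out the $r\geq 2$ faithfulness argument you sketch (an explicit separating rank-$2$ realization or a genericity/measure-zero count), since that is the step the paper does not actually provide.
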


Here $\algeq$ denotes equality of operators induced by the algebraic relations between projectors, i.e., orthogonality, idempotency, and commutation between different parties.

This result follows directly from Definitions~\ref{dfn:seq_of_projs} and~\ref{dfn:block_of_projs}. Since each block $\mathbb{P}^{(i)}$ is simplified, its associated operator $\hat{\mathbb{P}}^{(i)}$ is expressed in a reduced form with respect to these relations. Therefore, two tensor products of such operators are algebraically equal if and only if all corresponding local operators are algebraically equal.

\section{Random generation}
\label{app:random_generation}

In this section, we study the entries of a generic moment matrix, each of which takes the form:
\begin{equation}
	\label{eq:moment_generic}
	\Tr\left( \rho \hat{\mathbb{P}}^{(1)} \otimes \dots \otimes \hat{\mathbb{P}}^{(m)} \right),
\end{equation}
where $\rho$ is a quantum state and $\mathbb{P}^{(i)}$ is a simplified block of projectors associated with the measurements of the $i$-th party, with $\hat{\mathbb{P}}^{(i)}$ denoting the corresponding scenario operator as in~\eqref{eq:scenario_operator}. Recall that a block is \emph{simplified} if no further reduction in its length is possible using the algebraic properties of projectors (orthogonality and idempotency), as stated in Definition~\ref{dfn:block_of_projs}.

Let us consider two arbitrary entries of the moment matrix,
\begin{subequations}
	\begin{equation}
		e_1 := \Tr\left( \rho \hat{\mathbb{P}}^{(1)} \otimes \dots \otimes \hat{\mathbb{P}}^{(m)} \right),
	\end{equation}
	\begin{equation}
		e_2 := \Tr\left( \rho \hat{\mathbb{P}}'^{(1)} \otimes \dots \otimes \hat{\mathbb{P}}'^{(m)} \right),
	\end{equation}
\end{subequations}
and examine the conditions under which $e_1 = e_2$ holds when the quantum state and measurements are sampled at random, for fixed rank $r$ and Hilbert space dimension $d$. We focus here on the single-party case ($m = 1$), with the multi-party case being analogous.

We model both the quantum state and the measurements appearing in these expressions as measurable functions of a common random variable $\omega$ defined on an underlying probability space. Specifically, for $\ell_1, \ell_2 \in \mathbb{N}_{+}$, we define $\omega$ as the collection
\begin{equation}
	\label{eq:omega_def}
	\omega := 
	\left( p_{i,k,l} \right)_{\substack{i \in [\ell_1] \\ k \in [r] \\ l \in [d]}}
	\;\frown\;
	\left( p'_{j,k,l} \right)_{\substack{j \in [\ell_2] \\ k \in [r] \\ l \in [d]}}
	\;\frown\;
	\left( r_{k,l} \right)_{\substack{k \in [d] \\ l \in [d]}},
\end{equation}
where the symbol ``$\frown$'' denotes concatenation of sequences: for two sequences $a = (a_0, \dots, a_{n_1})$ and $b = (b_0, \dots, b_{n_2})$, we define $a \frown b := (a_0, \dots, a_{n_1}, b_0, \dots, b_{n_2})$.

We assume that the entries of $\omega$ are independent random variables with absolutely continuous distributions.

The variables in Eq.~\eqref{eq:omega_def} are connected to the state and measurement operators as follows:
\begin{subequations}
	\label{eqs:construction_rho_meas_from_omega}
	\begin{align}
		\label{eq:blockP}
		\mathbb{P} &= \big( P_0, P_1, \dots, P_{\ell_1 -1} \big), \\
		\label{eq:blockPprime}
		\mathbb{P}' &= \big( P'_0, P'_1, \dots, P'_{\ell_2 -1} \big),
	\end{align}
	with each individual rank-$r$ projector given by
	\begin{align}
		\label{eq:projP}
		P_i &= \sum_{k=0}^{r-1} \proj{p_{i,k}}, \\
		\label{eq:projPprime}
		P'_j &= \sum_{k=0}^{r-1} \proj{p'_{j,k}},
	\end{align}
	where the vectors are
	\begin{align}
		\ket{p_{i,k}} &:= 
		\begin{bmatrix}
			p_{i,k,0} \\
			p_{i,k,1} \\
			\vdots \\
			p_{i,k,d-1}
		\end{bmatrix}, &
		\ket{p'_{j,k}} &:= 
		\begin{bmatrix}
			p'_{j,k,0} \\
			p'_{j,k,1} \\
			\vdots \\
			p'_{j,k,d-1}
		\end{bmatrix}.
	\end{align}
	We assume that for each fixed $i$ (resp.\ $j$), the vectors $\{ \ket{p_{i,k}} \}_{k=0}^{r-1}$ (resp.\ $\{ \ket{p'_{j,k}} \}_{k=0}^{r-1}$) are orthonormal, so that $P_i$ and $P'_j$ are rank-$r$ projectors.
	
	The associated scenario operators are then
	\begin{equation}
		\label{eq:hat_mathbbP}
		\hat{\mathbb{P}} = P_0 P_1 \dots P_{\ell_1 -1},
	\end{equation}
	and
	\begin{equation}
		\label{eq:hat_mathbbPp}
		\hat{\mathbb{P}}' = P'_0 P'_1 \dots P'_{\ell_2 -1}.
	\end{equation}
	% The quantum state is parametrized as
% 	\begin{equation}
% 		\label{eq:rho}
% 		\rho := \frac{1}{\Tr(\Gamma)} \Gamma, \quad \text{where } \Gamma := R R^\dagger,
% 	\end{equation}
% 	with
% 	\begin{equation}
% 		\label{eq:vec_r}
% 		R :=
% 		\begin{bmatrix}
% 			r_{0,0} & \cdots & r_{0,d-1} \\
% 			\vdots & \ddots & \vdots \\
% 			r_{d-1,0} & \cdots & r_{d-1,d-1}
% 		\end{bmatrix}.
% 	\end{equation}
% \end{subequations}
The quantum state is parametrized as
	\begin{equation}
		\label{eq:rho}
		\rho := \sum_{k=0}^{d-1} c_k \ket{r_k} \bra{r_k},
	\end{equation}
	where each $\ket{r_k}$ is a vector in $\mathbb{C}^d$,
	\begin{equation}
		\label{eq:vec_r}
		\ket{r_k} :=
		\begin{bmatrix}
			r_{k,0} \\
			r_{k,1} \\
			\vdots \\
			r_{k,d-1}
		\end{bmatrix}.
	\end{equation}
\end{subequations}

This formal construction allows us to view entries in the moment matrix as functions of the random variable $\omega$, enabling a probabilistic analysis of when two such entries are equal under random sampling of the projectors and quantum state.

For the multi-party case $m > 1$, each party $i \in \{1, \dots, m\}$ is associated with its own simplified block of projectors $\mathbb{P}^{(i)} = \big( P^{(i)}_0, \dots, P^{(i)}_{\ell_i -1} \big)$, with the corresponding scenario operator:
\begin{equation}
	\hat{\mathbb{P}}^{(i)} := P^{(i)}_0 P^{(i)}_1 \dots P^{(i)}_{\ell_i -1}.
\end{equation}
An entry of the moment matrix then takes the form
\begin{equation}
	e = \Tr\Big( \rho \hat{\mathbb{P}}^{(1)} \otimes \hat{\mathbb{P}}^{(2)} \otimes \dots \otimes \hat{\mathbb{P}}^{(m)} \Big),
\end{equation}
where $\rho$ is a quantum state in the joint Hilbert space $\mathcal{H} = \bigotimes_{i=1}^m \mathcal{H}^{(i)}$. Random generation proceeds analogously by sampling, for each party independently, the vectors defining the projectors $P^{(i)}_j$, and sampling the global state $\rho$ on the entire tensor product space. The random variable $\omega$ then concatenates all such parameters for all parties.

Hence, equality of two multi-party moment matrix entries for all realizations (i.e.\ algebraically) reduces to equality of all local blocks $\mathbb{P}^{(i)}$, and thus of the corresponding operators $\hat{\mathbb{P}}^{(i)}$, across all parties, in direct analogy with the single-party case.

\section{Randomization results}
\label{app:randomization_results}

\begin{table}[h]
	\centering
	\begin{tabular}{|c|*{19}{c|}}
		\hline
		Length & 3 & 3 & 4 & 4 & 4 & 5 & 5 & 5 & 5 & 5 & 5 & 6 & 6 & 6 & 6 & 6 & 6 & 7 & 7 \\ \hline
		Rank & 3 & 2 & 3 & 4 & 3 & 2 & 3 & 3 & 2 & 3 & 4 & 2 & 3 & 2 & 2 & 3 & 3 & 3 & 2 \\ \hline
		Dimension & 6 & 9 & 15 & 8 & 15 & 6 & 11 & 18 & 10 & 9 & 24 & 14 & 12 & 11 & 6 & 6 & 12 & 12 & 6 \\ \hline
		Scenario & [2,2] & [4,4] & [2,4] & [2,2,2] & [5,2] & [2,3,3] & [2,3,3] & [4,4] & [2,2,5] & [3,3,3] & [2,6] & [7,2] & [3,3,3] & [2,2] & [3,2] & [2,2,2] & [2,3,4] & [4,4] & [3,2,3] \\
		\hline
	\end{tabular}
	\caption{Tested instances supporting Conjecture~\ref{conjecture1}. In all cases the rank $r$ of the generated projectors satisfies $r > 1$.}
	\label{tab:conjecture_1}
\end{table}

\begin{table}[h]
	\centering
	\begin{tabular}{|c|*{19}{c|}}
		\hline
		Length & 3 & 3 & 4 & 4 & 4 & 5 & 5 & 5 & 5 & 5 & 5 & 6 & 6 & 6 & 6 & 6 & 6 & 7 & 7 \\ \hline
		Dimension & 10 & 4 & 4 & 12 & 5 & 3 & 35 & 4 & 5 & 3 & 6 & 12 & 5 & 2 & 9 & 8 & 6 & 4 & 7 \\ \hline
		Scenario & [2,2] & [4,4] & [2,4] & [2,2,2] & [5,2] & [2,3,3] & [2,3,3] & [4,4] & [2,2,5] & [3,3,3] & [2,6] & [7,2] & [3,3,3] & [2,2] & [3,2] & [2,2,2] & [2,3,4] & [4,4] & [3,2,3] \\
		\hline
	\end{tabular}
	\caption{Tested instances supporting Conjecture~\ref{conjecture1} for rank-$1$ projectors. Only non-homogeneous pairs of blocks $\mathbb{P}$ and $\mathbb{P}'$ were considered.}
	\label{tab:conjecture_2}
\end{table}

To support Conjecture~\ref{conjecture1} below, we performed extensive numerical tests over a wide range of scenarios, block lengths, and Hilbert space dimensions, the details of which are summarized in Tables~\ref{tab:conjecture_1} and~\ref{tab:conjecture_2}. The tested instances were selected to cover a broad range of scenarios, block lengths, and Hilbert space dimensions.

\begin{conj}
	\label{conjecture1}
	Consider the following entries of a moment matrix associated with a single-party scenario:
	\begin{equation}
		e_1 := \Tr\big( \rho \hat{\mathbb{P}} \big),
	\end{equation}
	\begin{equation}
		e_2 := \Tr\big( \rho \hat{\mathbb{P}}' \big),
	\end{equation}
	with $e_1 \notalgeq e_2$ (Lemma~\ref{lem:moment_entry_equality}), where $\hat{\mathbb{P}}$ and $\hat{\mathbb{P}}'$ are operators constructed from the single-agent blocks of projectors $\mathbb{P}$ and $\mathbb{P}'$, respectively, associated with two entries of the moment matrix.

	Then, for any finite Hilbert space dimension $d$, under the random generation model described in Section~\ref{app:random_generation}, the probability that a randomly generated parameter configuration $\omega$ satisfies
	\begin{equation}
		\label{eq:lem_spp_e1e2}
		e_1[\omega] = e_2[\omega]
	\end{equation}
	is equal to $0$ if either
	\begin{equation}
		r > 1
		\quad \text{or} \quad
		\mathbb{P} \text{ and } \mathbb{P}' \text{ are not homogeneous}.
	\end{equation}
\end{conj}

\begin{proof}
	The statement is equivalent to showing that the following system is satisfied with probability $0$ for a randomly generated $\omega$:
	\begin{equation}
		\label{eq:systemP3}
		\left\{
		\begin{array}{ll}
			e_1[\omega] = e_2[\omega], \\
			r>1 \text{ or } \hat{\mathbb{P}}, \hat{\mathbb{P}}' \text{ are \emph{not} homogeneous}.
		\end{array}
		\right.
	\end{equation}

	By construction (see Section~\ref{app:random_generation}), both $e_1[\omega]$ and $e_2[\omega]$ are obtained by substituting the random parameters $\omega$ into the expressions
	\begin{equation}
		e_1 = \Tr(\rho \hat{\mathbb{P}}), \quad e_2 = \Tr(\rho \hat{\mathbb{P}}').
	\end{equation}
	Hence, their difference
	\begin{equation}
		f(\omega) := e_1[\omega] - e_2[\omega]
	\end{equation}
	is a function of the random parameters $\omega$.

	In the idealized setting, $f(\omega)$ is expected to have a polynomial (or, more generally, analytic) dependence on $\omega$. Under this assumption, either $f(\omega)$ is identically zero, or the set of solutions of the equation
	\begin{equation}
		f(\omega) = 0
	\end{equation}
	has measure zero.

	Therefore, to distinguish between these two cases, it is sufficient to test the equality numerically for randomly generated instances. If $f(\omega) \neq 0$ for a randomly sampled $\omega$, then this is consistent with the hypothesis that the solution set has measure zero.

	In our implementation, instead of sampling $\omega$ directly, we generate random states and measurements according to Algorithms~\ref{alg:HaarDensity} and~\ref{alg:random_rank_r_projectors_2}. Although we do not prove that this induces a strictly polynomial dependence on the underlying random parameters, we conjecture that the resulting distribution is sufficiently generic to preserve the measure-zero property of non-trivial solutions.

	In Tables~\ref{tab:conjecture_1} and~\ref{tab:conjecture_2} we report numerical results for systems of the form~\eqref{eq:systemP3}, for different values of the rank of the projectors, the considered scenarios, the dimension of the quantum state, and the length of the blocks. For each case, we generated $500{,}000$ instances and checked whether $e_1[\omega] = e_2[\omega]$ holds up to a tolerance of $10^{-16}$. In all tested cases, the equality was never satisfied, providing strong numerical evidence supporting the conjecture.
\end{proof}

This is consistent with the general principle that zeros of non-trivial analytic functions form sets of measure zero.

We now prove the following Lemma~\ref{lem:single_party_probability}.

\begin{lem}
	\label{lem:single_party_probability}
	Consider the following entries of a moment matrix associated with a single-party scenario:
	\begin{equation}
		e_1 := \Tr\big( \rho \hat{\mathbb{P}} \big),
	\end{equation}
	\begin{equation}
		e_2 := \Tr\big( \rho \hat{\mathbb{P}}' \big),
	\end{equation}
	with $e_1 \notalgeq e_2$ (Lemma~\ref{lem:moment_entry_equality}), where $\hat{\mathbb{P}}$ and $\hat{\mathbb{P}}'$ are operators constructed from the single blocks of projectors $\mathbb{P}$ and $\mathbb{P}'$, respectively.
	
	Then, assuming Conjecture~\ref{conjecture1}, for any finite Hilbert space dimension $d$, under the random generation model described in Section~\ref{app:random_generation}, the probability that a randomly generated parameter configuration $\omega$ satisfies
	\begin{equation}
		e_1[\omega] = e_2[\omega]
	\end{equation}
	is equal to $0$ if and only if 
	\begin{equation}
		r > 1 
		\quad \text{or} \quad 
		\hat{\mathbb{P}} \text{ and } \hat{\mathbb{P}}' \text{ are not homogeneous},
	\end{equation}
	in the sense of Definition~\ref{dfn:homogeneous_blocks}. Otherwise, the probability is $1$.
\end{lem}

\begin{proof}
	We denote by $\Prob(e_1[\omega] = e_2[\omega])$ the probability that a randomly generated parameter configuration $\omega$ satisfies $e_1[\omega] = e_2[\omega]$.
		
	The direction ``$\impliedby$'' is a direct consequence of Conjecture~\ref{conjecture1}.
	
	To prove ``$\implies$'', we assume that $r = 1$ and that $\hat{\mathbb{P}}$ and $\hat{\mathbb{P}}'$ are homogeneous, and show that $\Prob(e_1[\omega] = e_2[\omega]) = 1$. Consider the equality
	\begin{equation}
		\label{eq:thesisLemma3}
		\Tr \big( \rho \hat{\mathbb{P}} \big)\big|_{\omega} = 
		\Tr \big( \rho \hat{\mathbb{P}}' \big)\big|_{\omega}.
	\end{equation}
	
	Let us substitute the explicit expressions for the case $r = 1$. Using Eqs.~\eqref{eq:hat_mathbbP},~\eqref{eq:hat_mathbbPp},~\eqref{eq:projP},~\eqref{eq:projPprime} and~\eqref{eq:rho}, the equality~\eqref{eq:thesisLemma3} becomes
	\begin{equation}
		\label{eq:thesisLemma3_after_substitution}
		\sum_{k=0}^{d-1} c_k
		\bra{r_k} p_0 \rangle
		\bra{p_0} p_1 \rangle
		\bra{p_1} p_2 \rangle \cdots
		\bra{p_{\ell-1}} r_k \rangle
		=
		\sum_{k=0}^{d-1} c_k
		\bra{r_k} p'_0 \rangle
		\bra{p'_0} p'_1 \rangle
		\bra{p'_1} p'_2 \rangle \cdots
		\bra{p'_{\ell-1}} r_k \rangle.
	\end{equation}
	
	Since $\mathbb{P}$ and $\mathbb{P}'$ are \emph{homogeneous}, they have the same first and last projectors, hence
	\begin{equation}
		\bra{r_k} p_0 \rangle = \bra{r_k} p'_0 \rangle \quad \bigforall k,
	\end{equation}
	and
	\begin{equation}
		\bra{p_{\ell-1}} r_k \rangle = \bra{p'_{\ell-1}} r_k \rangle \quad \bigforall k.
	\end{equation}
	
	Moreover, homogeneity implies that $\mathbb{P}$ and $\mathbb{P}'$ have the same multiset of consecutive pairs. Therefore, the products of overlaps coincide:
	\begin{equation}
		\bra{p_0} p_1 \rangle
		\bra{p_1} p_2 \rangle \cdots
		\bra{p_{\ell-2}} p_{\ell-1} \rangle
		=
		\bra{p'_0} p'_1 \rangle
		\bra{p'_1} p'_2 \rangle \cdots
		\bra{p'_{\ell-2}} p'_{\ell-1} \rangle.
	\end{equation}
	
	Combining the above, we obtain $e_1[\omega] = e_2[\omega]$ for all $\omega$. Hence, $\Prob(e_1[\omega] = e_2[\omega]) = 1$.
\end{proof}

This shows that in the rank-$1$ homogeneous case, the equality is structural rather than probabilistic.

We now prove Lemma~\ref{conj:probability_equality} based on the single-party analysis.

\section{Main results}
\label{app:appendixLemma}

In the following lemma we formulate the condition under which two entries of a moment matrix, obtained from randomly generated projectors and quantum states, are equal, and we identify when additional (non-algebraic) equalities may arise with respect to Lemma~\ref{lem:moment_entry_equality}.

\begin{lem}
	\label{conj:probability_equality}
	Let us consider the following two entries of a moment matrix:
	\begin{equation}
		e_1 := \Tr\Big( \rho \hat{\mathbb{P}}^{(0)} \otimes \dots \otimes \hat{\mathbb{P}}^{(m-1)} \Big),
	\end{equation}
	\begin{equation}
		 e_2 := \Tr\Big( \rho \hat{\mathbb{P}}'^{(0)} \otimes \dots \otimes \hat{\mathbb{P}}'^{(m-1)} \Big),
	\end{equation}
	with $e_1 \notalgeq e_2$ (Lemma~\ref{lem:moment_entry_equality}), where each $\hat{\mathbb{P}}^{(i)}$ and $\hat{\mathbb{P}}'^{(i)}$ is the scenario operator associated with simplified blocks $\mathbb{P}^{(i)}$ and $\mathbb{P}'^{(i)}$, respectively. 
	
	Assuming Conjecture~\ref{conjecture1}, for any finite Hilbert space dimension $d$, and under the random generation model described in Section~\ref{app:random_generation}, the probability that a randomly generated parameter configuration $\omega$ satisfies 
	\begin{equation}
		 e_1[\omega] = e_2[\omega]
	\end{equation}
	is equal to $0$ if 
	\begin{equation}
		r > 1 \quad \text{or} \quad \exists i \in [m] \ \text{such that } \mathbb{P}^{(i)} \text{ and } \mathbb{P}'^{(i)} \text{ are not homogeneous},
	\end{equation}
	in the sense of Definition~\ref{dfn:homogeneous_blocks}.
	
	In the remaining cases (i.e.\ when $r = 1$ and all corresponding blocks are homogeneous), the probability of equality is $1$. This corresponds to the fact that in these cases the equality $e_1[\omega] = e_2[\omega]$ holds identically for all realizations generated by $\omega$.
\end{lem}

Here, the notation $e_1[\omega]$ denotes the evaluation of the entry $e_1$ after substituting the randomly generated parameters $\omega$ that define both the state $\rho$ and the projectors comprising $\hat{\mathbb{P}}^{(i)}$, and similarly for $e_2[\omega]$.

Without loss of generality, we can restrict our attention to the single-party case, where $m = 1$ and we denote 
\begin{equation}
	\mathbb{P} := \mathbb{P}^{(0)}, \quad 
	\mathbb{P}' := \mathbb{P}'^{(0)}, \quad
	\hat{\mathbb{P}} := \hat{\mathbb{P}}^{(0)}, \quad
	\hat{\mathbb{P}}' := \hat{\mathbb{P}}'^{(0)}.
\end{equation}

To see why it is sufficient to consider a single party, let us illustrate this with a two-party case. Suppose we have the equality:
\begin{equation}
	\label{eq:two_party_case}
	\Tr\Big( \rho \hat{\mathbb{P}}^{(1)} \otimes \hat{\mathbb{P}}^{(2)} \Big)
	=
	\Tr\Big( \rho \hat{\mathbb{P}}'^{(1)} \otimes \hat{\mathbb{P}}'^{(2)} \Big),
\end{equation}
where $\hat{\mathbb{P}}^{(1)}, \hat{\mathbb{P}}^{(2)}, \hat{\mathbb{P}}'^{(1)}, \hat{\mathbb{P}}'^{(2)}$ are scenario operators built from simplified blocks of projectors. 

For Eq.~\eqref{eq:two_party_case} to hold for all quantum states $\rho$, it is necessary that 
\begin{equation}
	\hat{\mathbb{P}}^{(1)} \equiv \hat{\mathbb{P}}'^{(1)}
	\quad \text{and} \quad
	\hat{\mathbb{P}}^{(2)} \equiv \hat{\mathbb{P}}'^{(2)}.
\end{equation}
Here $\equiv$ denotes equality of the corresponding operator representations, i.e.\ equality after substituting the parametrization induced by $\omega$. In particular, algebraic equality ($\algeq$) implies representational equality ($\equiv$), but not conversely.

If at least one of these conditions is violated, then the equality in Eq.~\eqref{eq:two_party_case} imposes a non-trivial constraint on $\rho$ (i.e.\ one that is not satisfied for all parameter values in Eq.~\eqref{eq:rho}). Consequently, the set of $\rho$ satisfying this equality has measure zero under the random generation model.

Since the same reasoning applies independently for each party $i \in [m]$, it is therefore sufficient to analyze the single-party scenario to determine the conditions under which equality between entries of the moment matrix holds.

% \begin{proof}[Justification of Conjecture~\ref{conj:probability_equality}]
\begin{proof}[Proof of Lemma~\ref{conj:probability_equality}]
	As shown above, it is sufficient to consider the single-party case. Indeed, for a multi-party scenario, equality of two moment matrix entries,
	\begin{equation}
		e_1[\omega] = e_2[\omega],
	\end{equation}
	requires simultaneous equality of all local contributions corresponding to each party.

	Therefore, if there exists at least one party $i \in [m]$ such that the corresponding local operators satisfy
	\begin{equation}
		\hat{\mathbb{P}}^{(i)} \not\equiv \hat{\mathbb{P}}'^{(i)},
	\end{equation}
	then the equality imposes a non-trivial constraint on the parameters associated with that party, and thus occurs with probability $0$ under random generation.

	Consequently, the probability that $e_1[\omega] = e_2[\omega]$ is equal to $1$ if and only if, for all parties $i \in [m]$, the corresponding single-party equalities hold with probability $1$.

	Applying Lemma~\ref{lem:single_party_probability}, this happens if and only if for all parties $i$ we have
	\begin{equation}
		r = 1 \quad \text{and} \quad \mathbb{P}^{(i)} \text{ and } \mathbb{P}'^{(i)} \text{ are homogeneous}.
	\end{equation}

	Equivalently, if $r > 1$ or there exists at least one party for which the corresponding blocks are not homogeneous, then the equality holds with probability $0$.

	This proves the statement of Lemma~\ref{conj:probability_equality}.
\end{proof}

\bibliographystyle{ieeetr}
\bibliography{thetool_refs}

@InProceedings{Yao98,
  author       = {Mayers, Dominic and Yao, Andrew},
  booktitle    = {Proceedings 39th Annual Symposium on Foundations of Computer Science (Cat. No. 98CB36280)},
  title        = {Quantum cryptography with imperfect apparatus},
  year         = {1998},
  organization = {IEEE},
  pages        = {503--509},
}

@Article{H4,
  author    = {Horodecki, Ryszard and Horodecki, Pawe{\l} and Horodecki, Micha{\l} and Horodecki, Karol},
  journal   = {Reviews of modern physics},
  title     = {Quantum entanglement},
  year      = {2009},
  number    = {2},
  pages     = {865},
  volume    = {81},
  publisher = {APS},
}

@Article{NPA1,
  author    = {Navascu{\'e}s, Miguel and Pironio, Stefano and Ac{\'\i}n, Antonio},
  journal   = {Physical Review Letters},
  title     = {Bounding the set of quantum correlations},
  year      = {2007},
  number    = {1},
  pages     = {010401},
  volume    = {98},
  publisher = {APS},
}

@Article{NPA2,
  author    = {Navascu{\'e}s, Miguel and Pironio, Stefano and Ac{\'\i}n, Antonio},
  journal   = {New Journal of Physics},
  title     = {A convergent hierarchy of semidefinite programs characterizing the set of quantum correlations},
  year      = {2008},
  number    = {7},
  pages     = {073013},
  volume    = {10},
  publisher = {IOP Publishing},
}

@Article{Pironio10,
  author    = {Pironio, Stefano and Ac{\'\i}n, Antonio and Massar, Serge and de La Giroday, A Boyer and Matsukevich, Dzmitry N and Maunz, Peter and Olmschenk, Steven and Hayes, David and Luo, Le and Manning, T Andrew and others},
  journal   = {Nature},
  title     = {Random numbers certified by Bell’s theorem},
  year      = {2010},
  number    = {7291},
  pages     = {1021--1024},
  volume    = {464},
  publisher = {Nature Publishing Group UK London},
}

@Article{PiotrThesis,
  author  = {Mironowicz, Piotr},
  journal = {arXiv preprint arXiv:1810.05145},
  title   = {Applications of semi-definite optimization in quantum information protocols},
  year    = {2018},
}

@Article{DIreview,
  author  = {Ghoreishi, Seyed Arash and Scala, Giovanni and Renner, Renato and Tacca, Letícia Lira and Bouda, Jan and Walborn, Stephen Patrick and Pawłowski, Marcin},
  journal = {arXiv preprint arXiv:2504.06350v2},
  title   = {The future of secure communications: device independence in quantumkey distribution},
  year    = {2025},
}

@Article{ncpol2sdpa,
  author    = {Wittek, Peter},
  journal   = {ACM Transactions on Mathematical Software (TOMS)},
  title     = {Algorithm 950: Ncpol2sdpa—sparse semidefinite programming relaxations for polynomial optimization problems of noncommuting variables},
  year      = {2015},
  number    = {3},
  pages     = {1--12},
  volume    = {41},
  publisher = {ACM New York, NY, USA},
}

@Book{Boyd04,
  author    = {Boyd, Stephen P and Vandenberghe, Lieven},
  publisher = {Cambridge university press},
  title     = {Convex optimization},
  year      = {2004},
}

@Article{Marcin,
  author    = {Paw{\l}owski, Marcin and Brunner, Nicolas},
  journal   = {Physical Review A},
  title     = {Semi-device-independent security of one-way quantum key distribution},
  year      = {2011},
  number    = {1},
  pages     = {010302},
  volume    = {84},
  publisher = {APS},
}

@Article{dimWit1,
  author    = {Mironowicz, Piotr and Li, Hong-Wei and Paw{\l}owski, Marcin},
  journal   = {Physical Review A},
  title     = {Properties of dimension witnesses and their semidefinite programming relaxations},
  year      = {2014},
  number    = {2},
  pages     = {022322},
  volume    = {90},
  publisher = {APS},
}

@Article{dimWit2,
  author    = {Navascu{\'e}s, Miguel and V{\'e}rtesi, Tam{\'a}s},
  journal   = {Physical review letters},
  title     = {Bounding the set of finite dimensional quantum correlations},
  year      = {2015},
  number    = {2},
  pages     = {020501},
  volume    = {115},
  publisher = {APS},
}

@Article{Otfried,
  author    = {Yu, Xiao-Dong and Simnacher, Timo and Nguyen, H Chau and G{\"u}hne, Otfried},
  journal   = {PRX Quantum},
  title     = {Quantum-inspired hierarchy for rank-constrained optimization},
  year      = {2022},
  number    = {1},
  pages     = {010340},
  volume    = {3},
  publisher = {APS},
}

@Article{rank1,
  author    = {Orsi, Robert and Helmke, Uwe and Moore, John B},
  journal   = {Automatica},
  title     = {A Newton-like method for solving rank constrained linear matrix inequalities},
  year      = {2006},
  number    = {11},
  pages     = {1875--1882},
  volume    = {42},
  publisher = {Elsevier},
}

@Article{rank2,
  author    = {Sun, Chuangchuang and Dai, Ran},
  journal   = {Automatica},
  title     = {Rank-constrained optimization and its applications},
  year      = {2017},
  pages     = {128--136},
  volume    = {82},
  publisher = {Elsevier},
}

@InProceedings{Karmarkar,
  author    = {Karmarkar, Narendra},
  booktitle = {Proceedings of the sixteenth annual ACM symposium on Theory of computing},
  title     = {A new polynomial-time algorithm for linear programming},
  year      = {1984},
  pages     = {302--311},
}

@Article{EPR35,
  author    = {Einstein, Albert and Podolsky, Boris and Rosen, Nathan},
  journal   = {Physical review},
  title     = {Can quantum-mechanical description of physical reality be considered complete?},
  year      = {1935},
  number    = {10},
  pages     = {777},
  volume    = {47},
  publisher = {APS},
}

@Article{Bell,
  author    = {Bell, John S},
  journal   = {Physics Physique Fizika},
  title     = {On the einstein podolsky rosen paradox},
  year      = {1964},
  number    = {3},
  pages     = {195},
  volume    = {1},
  publisher = {APS},
}

@Article{CHSH,
  author    = {Clauser, John F and Horne, Michael A and Shimony, Abner and Holt, Richard A},
  journal   = {Physical review letters},
  title     = {Proposed experiment to test local hidden-variable theories},
  year      = {1969},
  number    = {15},
  pages     = {880},
  volume    = {23},
  publisher = {APS},
}

@Article{Aspect,
  author    = {Aspect, Alain and Grangier, Philippe and Roger, G{\'e}rard},
  journal   = {Physical review letters},
  title     = {Experimental tests of realistic local theories via Bell's theorem},
  year      = {1981},
  number    = {7},
  pages     = {460},
  volume    = {47},
  publisher = {APS},
}

@Article{LHV,
  author  = {Werner, Reinhard F and Wolf, Michael M},
  journal = {arXiv preprint quant-ph/0107093},
  title   = {Bell inequalities and entanglement},
  year    = {2001},
}

@Article{Gurvits,
  author  = {Gurvits, Leonid},
  journal = {arXiv preprint quant-ph/0201022},
  title   = {Quantum matching theory (with new complexity theoretic, combinatorial and topological insights on the nature of the quantum entanglement)},
  year    = {2002},
}

@Article{DPS,
  author    = {Doherty, Andrew C and Parrilo, Pablo A and Spedalieri, Federico M},
  journal   = {Physical Review A},
  title     = {Detecting multipartite entanglement},
  year      = {2005},
  number    = {3},
  pages     = {032333},
  volume    = {71},
  publisher = {APS},
}

@Article{Kempe,
  author    = {Kempe, Julia and Kobayashi, Hirotada and Matsumoto, Keiji and Toner, Ben and Vidick, Thomas},
  journal   = {SIAM Journal on Computing},
  title     = {Entangled games are hard to approximate},
  year      = {2011},
  number    = {3},
  pages     = {848--877},
  volume    = {40},
  publisher = {SIAM},
}

@article{mironowicz2024semi,
  title={Semi-definite programming and quantum information},
  author={Mironowicz, Piotr},
  journal={Journal of Physics A: Mathematical and Theoretical},
  volume={57},
  number={16},
  pages={163002},
  year={2024},
  publisher={IOP Publishing}
}

@Article{doi:10.1137/1038003,
  author   = {Vandenberghe, Lieven and Boyd, Stephen},
  journal  = {SIAM Review},
  title    = {Semidefinite Programming},
  year     = {1996},
  number   = {1},
  pages    = {49-95},
  volume   = {38},
  abstract = {In semidefinite programming, one minimizes a linear function subject to the constraint that an affine combination of symmetric matrices is positive semidefinite. Such a constraint is nonlinear and nonsmooth, but convex, so semidefinite programs are convex optimization problems. semidefinite programming unifies several standard problems (e.g., linear and quadratic programming) and finds many applications in engineering and combinatorial optimization.Although semidefinite programs are much more general than linear programs, they are not much harder to solve. Most interior-point methods for linear programming have been generalized to semidefinite programs. As in linear programming, these methods have polynomial worst-case complexity and perform very well in practice. This paper gives a survey of the theory and applications of semidefinite programs and an introduction to primaldual interior-point methods for their solution.},
  doi      = {10.1137/1038003},
  eprint   = {https://doi.org/10.1137/1038003},
  url      = {https://doi.org/10.1137/1038003},
}

@misc{woodhead2024quantumnpa,
  author       = {Erik Woodhead},
  title        = {{QuantumNPA.jl}},
  year         = {2024},
  howpublished = {\url{https://github.com/ewoodhead/QuantumNPA.jl}},
  note         = {Accessed: 2026-03-20}
}

@book{lasserre2010moments,
  author    = {Jean-Bernard Lasserre},
  title     = {Moments, Positive Polynomials and Their Applications},
  year      = {2010},
  publisher = {Imperial College Press},
  address   = {London},
  isbn      = {978-1-84816-446-8},
  note      = {OCLC: 624365972}
}

@book{murty1983linear,
  author    = {Katta G. Murty},
  title     = {Linear Programming},
  year      = {1983},
  publisher = {John Wiley \& Sons},
  address   = {New York},
  pages     = {xix+482},
  isbn      = {978-0-471-09725-9},
  note      = {MR: 0720547}
}

@Article{kantorovich1940method,
  author  = {L. V. Kantorovich},
  journal = {Doklady Akademii Nauk SSSR},
  title   = {{Ob odnom effektivnom metode resheniya nekotorykh klassov ekstremal'nykh problem}},
  year    = {1940},
  note    = {In Russian. English translation: ``A new method of solving some classes of extremal problems.''},
  pages   = {211--214},
  volume  = {28},
}

@incollection{dantzig1951maximization,
  author    = {George B. Dantzig},
  title     = {Maximization of a Linear Function of Variables Subject to Linear Inequalities},
  booktitle = {Activity Analysis of Production and Allocation},
  editor    = {Tjalling C. Koopmans},
  publisher = {John Wiley \& Sons and Chapman \& Hall},
  address   = {New York and London},
  year      = {1951},
  pages     = {339--347},
  note      = {Originally written in 1947}
}

@Article{10.2307/2296111,
  author  = {Neumann, J. v.},
  journal = {The Review of Economic Studies},
  title   = {A Model of General Economic Equilibrium 1},
  year    = {1945},
  issn    = {0034-6527},
  month   = {04},
  number  = {1},
  pages   = {1-9},
  volume  = {13},
  doi     = {10.2307/2296111},
  eprint  = {https://academic.oup.com/restud/article-pdf/13/1/1/6932720/13-1-1.pdf},
  url     = {https://doi.org/10.2307/2296111},
}

@article{Mezzadri2007,
  author  = {Mezzadri, Francesco},
  title   = {How to Generate Random Matrices from the Classical Compact Groups},
  journal = {Notices of the American Mathematical Society},
  volume  = {54},
  number  = {5},
  pages   = {592--604},
  year    = {2007}
}

@article{ZyczkowskiSommers2001,
  author  = {{\.Z}yczkowski, Karol and Sommers, Hans-J\"urgen},
  title   = {Induced Measures in the Space of Mixed Quantum States},
  journal = {J. Phys. A: Math. Gen.},
  volume  = {34},
  number  = {35},
  pages   = {7111--7125},
  year    = {2001},
  doi     = {10.1088/0305-4470/34/35/335}
}

@article{Zyczkowski2011,
  author  = {{\.Z}yczkowski, Karol and Penson, Karol A. and Nechita, Ion and Collins, Beno\^{\i}t},
  title   = {Generating Random Density Matrices},
  journal = {J. Math. Phys.},
  volume  = {52},
  number  = {6},
  pages   = {062201},
  year    = {2011},
  doi     = {10.1063/1.3595693}
}

@book{NielsenChuang2010,
  author    = {Nielsen, Michael A. and Chuang, Isaac L.},
  title     = {Quantum Computation and Quantum Information},
  publisher = {Cambridge University Press},
  address   = {Cambridge},
  edition   = {10th Anniversary},
  year      = {2010}
}

\end{document}